\definecolor{webgreen}{rgb}{0,.5,0}
\definecolor{webbrown}{rgb}{.6,0,0}
\theoremstyle{definition}
\newtheorem{theorem}{Theorem}
\newtheorem{lemma}{Lemma}
\newcommand{\Enn}{\mathbb{N}}
\newcommand{\seqnum}[1]{\href{https://oeis.org/#1}{\rm \underline{#1}}}
\author{Robert Cummings, Jeffrey Shallit, and Paul Staadecker\\
School of Computer Science\\
University of Waterloo\\
Waterloo, ON  N2L 3G1 \\
Canada\\
\href{mailto:rcummings000@gmail.com}{\tt rcummings000@gmail.com} \\
\href{mailto:shallit@uwaterloo.ca}{\tt shallit@uwaterloo.ca}\\
\href{mailto:staadeckerpaul@gmail.com}{\tt staadeckerpaul@gmail.com}
}
\title{Mesosome Avoidance}
\begin{document}
\maketitle

\begin{abstract}
We consider avoiding mesosomes---that is, words of the form $xx'$ with
$x'$ a conjugate of $x$ that is different from $x$---over a binary alphabet.
We give a structure theorem for mesosome-avoiding words, count how many
there are, characterize all the infinite mesosome-avoiding words, and
determine the minimal forbidden words.
\end{abstract}

\section{Introduction}
\label{intro}
Since the early 20th century, with the work of Axel Thue
\cite{Thue:1906,Thue:1912}, pattern avoidance in words has been a topic of interest.   Thue proved that there exists an infinite word over a $3$-letter alphabet having no factor (i.e., a contiguous sub-block) of the form $xx$, where $x$ is a nonempty word.  We say that such a word {\it avoids squares}.  It is easy to see that every binary word of length $\geq 4$ has a square, so one cannot avoid squares over a $2$-letter alphabet.  Thue also proved that there exists an infinite word over a $2$-letter alphabet avoiding overlaps, that is, factors of the form $axaxa$, where $a$ is a single letter and $x$ is a possibly empty word. 

Since Thue's pioneering work, many similar avoidance problems have been studied.   For example, Erd\H{o}s \cite{Erdos:1961} posed the problem of avoiding abelian squares in words---i.e., words of the form $x x'$ where $x'$ is a permutation of $x$---and Ker\"anen \cite{Keranen:1992} showed that it is possible to avoid abelian squares over a $4$-letter alphabet; the alphabet size is optimal.
In \cite{Loftus&Shallit&Wang:2000}, the authors studied words avoiding $x \sigma(x)$, where $\sigma$ is a cyclic shift of the underlying alphabet.   In \cite{Ng&Ochem&Rampersad&Shallit:2019}, the authors studied words avoiding $x h(x)$, where $h$ is an arbitrary nonerasing morphism.   In \cite{Ochem&Rampersad&Shallit:2008}, the authors
studied words avoiding $x x'$ where $x'$ is ``nearly'' identical to $x$.  A classic problem, due independently to Justin \cite{Justin:1972a}, Brown and Freedman \cite[Conjecture, pp.~595--596]{Brown&Freedman:1987}, 
Pirillo and Varricchio \cite{Pirillo&Varricchio:1994},
and
Halbeisen and Hungerb\"uhler \cite{Halbeisen&Hungerbuhler:2000} and still unsolved,
asks if it is possible to avoid additive squares (words of the form $x x'$ where
$|x| = |x'|$ and $\sum x = \sum x'$) over some finite subset of 
$\Enn$, the natural numbers.

A natural pattern to consider is $x x'$, where $x'$ is a conjugate (cyclic shift) of $x$. For example, the English words {\tt enlist} and {\tt listen} are conjugates.  However, since $x$ and $x'$ are conjugate if and only if there exist words $u, v$ with $x = uv$ and $x' = vu$, we see that a word contains this kind of pattern if and only if it contains a square.   Thus, it is impossible to avoid such a pattern over a binary alphabet, although it can be avoided over a ternary alphabet.

Instead, in this note, we consider a minor variation of this pattern.   We say a finite nonempty word $w$ is a {\it mesosome\/} if it is of the form $x x'$, with $x'$ a conjugate of $x$, and $x \not= x'$.   The word {\tt mesosome} itself is a mesosome, as {\tt some} is a cyclic shift of {\tt meso}.    Every squarefree word, of course, avoids mesosomes, so mesosomes can certainly be avoided over a $3$-letter alphabet.   Our goal is to consider mesosome avoidance over a binary alphabet.

There are four principal results.  First, we characterize
all finite binary words avoiding mesosomes.   Second, we count how many such words there are of length $n$, and show this number is a cubic function of $n$. Third, we characterize all infinite binary words avoiding mesosomes.  Finally, we characterize the minimal forbidden words for mesosome-avoiding words.

\section{Characterizing the finite mesosome-avoiding words}
\label{two}

We define $\overline{0} = 1$ and $\overline{1} = 0$,
and extend this to finite and infinite words in the obvious way.   Clearly a binary word $x$ has a mesosome iff $\overline{x}$ has a mesosome.   Without loss of generality, then, in our characterization we can restrict our attention to nonempty words that begin with $0$.  We make this assumption in what follows.

The basic idea is to classify mesosome-avoiding words starting with $0$ by the number of runs they contain.  A {\it run\/} is a maximal block of contiguous identical symbols.   With the notation $x \sim x'$, we mean that $x$ and $x'$ are conjugates (possibly equal).

\bigskip

\noindent {\it One run:}   Clearly $0^i$ avoids mesosomes for all $i \geq 1$.

\bigskip

\noindent {\it Two runs:}     Clearly $0^i 1^j$ avoids mesosomes for all $i, j \geq 1$.

\bigskip

\noindent {\it Three runs:}   
\begin{lemma}
A word of the form $0^i 1^j 0^k$ 
(for $i, j, k \geq 1$) avoids mesosomes iff $j$ is odd.
\label{threeruns}
\end{lemma}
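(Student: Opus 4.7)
The plan is to classify potential mesosome factors $xx'$ of $0^i 1^j 0^k$ by their run structure: since the ambient word has three runs, any factor has at most three runs, so only three shapes must be handled. I use throughout that $x \sim x'$ forces $x$ and $x'$ to have the same length and the same number of 0's and 1's.

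For the ``only if'' direction I exhibit an explicit mesosome when $j$ is even. The factor $0 \cdot 1^j \cdot 0$ lies in $0^i 1^j 0^k$ because $i, k \geq 1$, and it factors as $(0 \cdot 1^{j/2})(1^{j/2} \cdot 0) = uv \cdot vu$ with $u = 0$ and $v = 1^{j/2}$; since $j \geq 2$, the words $uv$ and $vu$ differ, so this is genuinely a mesosome.

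For the ``if'' direction, I suppose $j$ is odd and, for contradiction, fix a mesosome factor $xx'$ of $0^i 1^j 0^k$. If $xx'$ has a single run, then $x = x'$, contradicting the requirement $x \neq x'$. If $xx'$ has two runs, say $0^a 1^b$ or $1^b 0^c$ with all exponents positive, then I check whether the midpoint-split falls inside the 0-block or the 1-block and compare the letter counts in $x$ to the required $a/2$ zeros and $b/2$ ones; in every sub-case this forces one exponent to vanish, a contradiction. So $xx'$ must have three runs, $0^a 1^b 0^c$ with $a, b, c \geq 1$.

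In this last case the key observation is that the 1-block of $xx'$ has a 0 adjacent to it on each side \emph{within the factor}; but in $0^i 1^j 0^k$ the only 1's form the middle block, and only its two endpoints sit next to 0's in the ambient word. Hence the factor must contain the entire middle run, i.e., $b = j$. Conjugacy of $x$ and $x'$ then forces the total number of 1's to be even, so $b$ is even, contradicting $j$ odd. The main obstacle is this ``swallow the middle run'' step; once it is established, the parity contradiction closes the argument.
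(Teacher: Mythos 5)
Your proof is correct and takes essentially the same route as the paper's: the same explicit mesosome $0\,1^{j/2}\cdot 1^{j/2}\,0$ for even $j$, and for odd $j$ the same parity contradiction obtained by showing any mesosome factor must absorb the entire middle run $1^j$ and split its $1$'s evenly between $x$ and $x'$. Your case analysis by run count of $xx'$ merely spells out what the paper states more tersely (that $x$ must start with $0$ and end with $1$, and $x'$ must start with $1$ and end with $0$).
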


\begin{proof}
If $w = 0^i 1^j 0^k$ is mesosome-free, then $j$ must be odd. For otherwise $j = 2r $ and $w$ contains a mesosome factor of the form $01^{r}1^{r}0$.  For the converse, suppose $j$ is odd and $w = 0^i 1^j 0^k$ contains a mesosome $x x'$.   Then $x$ must start with $0$ and end with $1$ and $x'$ must start with $1$ and end with $0$.   But then $j$ would be even, a contradiction.
\end{proof}

\bigskip

\noindent{\it Four runs:}  
\begin{lemma}
The word $0^i 1^j 0^k 1^\ell$ (for $i, j, k, \ell \geq 1$) avoids mesosomes iff \begin{itemize}
    \item[(a)] $j = k = 1$ or 
    \item[(b)] $i < k$ and $j > \ell$ and $j, k$ are odd.
\end{itemize}
\label{fourruns}
\end{lemma}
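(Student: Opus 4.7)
My plan is to prove the two directions separately, in each case classifying candidate mesosome factors of $w$ by the number of runs they contain.

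\textbf{Forward direction.} Assume $w$ is mesosome-free and (a) fails. First, exactly as in Lemma~\ref{threeruns}, $j$ must be odd (otherwise $01^{j/2}\cdot 1^{j/2}0$ is a mesosome factor), and symmetrically, if $k \ge 2$ then $k$ must be odd (otherwise $10^{k/2}\cdot 0^{k/2}1$ is one). Second, if $j = 1$ and $k$ is odd with $k \ge 3$, then $w$ contains the factor $010^k 1$, whose two halves $01\cdot 0^{(k-1)/2}$ and $0^{(k+1)/2}\cdot 1$ are distinct cyclic conjugates (the latter is the former rotated by two positions), so it is a mesosome. The reverse-complement map $(i,j,k,\ell)\mapsto(\ell,k,j,i)$ preserves both the 4-run structure and mesosome-freeness, so the symmetric case $k=1$ and $j\ge 3$ is also ruled out; hence both $j$ and $k$ are odd and at least $3$. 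Third, if $i \ge k$, consider the factor $0^k 1^j 0^k 1$; it has even length $2k+j+1$, and its two halves $0^k 1^{(j+1)/2}$ and $1^{(j-1)/2} 0^k 1$ are distinct cyclic rotations of one another (differing by a shift of $k+1$), so it is a mesosome. The case $\ell\ge j$ follows by the same reverse-complement symmetry. Therefore $i<k$ and $j>\ell$, which is case (b).

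\textbf{Backward direction.} In each of cases (a) and (b), I verify that no factor of $w$ is a mesosome by grouping factors by their number of runs. One- and two-run factors are ruled out by direct Parikh computation. In case (b), any three-run factor of $w$ has its middle run equal to all of $1^j$ or all of $0^k$, so by the oddness of $j$ and $k$ its Parikh vector cannot split evenly between the two halves. For four-run factors, which necessarily have the shape $0^a 1^j 0^k 1^d$ with $1\le a\le i$ and $1\le d\le \ell$, a case split on which of the four runs contains the midpoint forces either $a=k$ (contradicting $a\le i<k$), $j=d$ (contradicting $d\le\ell<j$), or a vacuous equality like $j+d=0$. In case (a), three-run factors of shape $0^a 1 0$ or $1 0 1^b$ fail the Parikh test, and four-run factors $0^a 1 0 1^d$ admit a balanced Parikh split only when $a=d=1$, which yields the true square $0101$, not a mesosome.

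The main obstacle is the third step of the forward direction: one must guess the correct even-length factor when $i \ge k$ and verify that its two halves are genuine cyclic rotations of each other. The choice $0^k 1^j 0^k 1$, i.e., appending a single extra $1$ to the symmetric block $0^k 1^j 0^k$, is what nudges the midpoint into the central block of ones so that the two halves become nontrivial conjugates rather than coinciding to form a square.
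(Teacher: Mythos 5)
Your proof is correct. The necessity direction is essentially the paper's: you produce the same two witness mesosomes, namely $0^k1^{(j+1)/2}\cdot 1^{(j-1)/2}0^k1$ when $i\ge k$ and $j>1$, and $010^{(k-1)/2}\cdot 0^{(k+1)/2}1$ when $j=1$ and $k>1$, with the remaining cases handled by the same reverse-complement symmetry $\overline{w^R}$. The sufficiency direction, however, is organized differently. The paper fixes a hypothetical mesosome $xx'$ and does a positional case analysis on where $x$ begins and ends relative to the runs of $w$, deriving letter-count contradictions case by case. You instead enumerate all factors of $w$ by their number of runs (one through four), observe that the middle runs of any such factor are complete runs of $w$, and apply a single uniform test: the two halves of a mesosome must have equal Parikh vectors, so you locate the midpoint and check whether the zero- and one-counts can balance. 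This buys a cleaner exhaustive structure---every factor is covered by one of four shapes, and conjugacy is never needed beyond the Parikh invariant---at the cost of having to note separately that the one surviving balanced split in case (a), namely $0101$, is a square rather than a mesosome. Both arguments ultimately rest on the same letter-counting idea, but yours is the more systematic bookkeeping.
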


\begin{proof}
Let us show that the strings in Cases (a) and (b) avoid mesosomes.

\bigskip

\noindent{\it Case (a): }  $w = 0^i 1 0 1^\ell$.   Suppose $w$ has a mesosome.
Then $w = u x x' v$ with $x \sim x'$ and $x \ne x'$.
Suppose $x$ begins inside the first group of $0$'s in $w$.   If it ends
inside the first group of $0$'s, then either $x' = x$ or $x'$ contains
a $1$, a contradiction.   So $x$ contains a $1$.    But then either
$x = 0^s 1$, forcing $x' = x$, or $x = 0^s 1 0$, forcing $x'$ to have
no $0$'s.   

So $x$ begins with the first $1$.  But then there is no possible choice
for $x'$.

So $x$ begins at a position $\geq i+2$ in $w$.   But then either $x$ starts with
$0$ and $x'$ doesn't, or $x = 1^t$, forcing $x' = x$.
In all cases we get a contradiction.

\bigskip

\noindent{\it Case (b):}  $w = 0^i 1^j 0^k 1^\ell$, with $i < k$ and $j > \ell$ and $j, k$ are odd.

Suppose $w$ has a mesosome. Then again, $w=uxx'v$ with $x \sim x'$ and $x \ne x'$. Suppose $x$ begins inside the first group of $0$'s in $w$. Then as in the previous case, $x$ cannot end in the first group of $0$'s. Suppose $x$ ends inside the first group of $1$'s in $w$. Then since $j$ is odd, $x'$ must contain some $1$'s from the second group of $1$'s. This means that $x'$ has $k$ $0$'s. Since $k > i$,  we see that $x'$ has more $0$'s than $x$ does, a contradiction. Suppose that $x$ ends after the first group of $1$'s. Then $x$ has at least $j$ $1$'s, and $x'$ has fewer than $j$ $1$'s, a contradiction.

So $x$ must begin within the first group of $1$'s or after. But then $x'$ must end in the second group of $1$'s. Since $k$ is odd and $j >\ell$, we can apply the argument above to $\overline{w^R}$. 

Now we show that these are the only mesosome-avoiding binary words
with four runs.

Suppose $w= 0^i 1^j 0^k 1^\ell$ with $i, j,k,\ell \geq 1$.
Considering the factor $0^i 1^j 0^k$ gives us that $j$ is odd, by Lemma~\ref{threeruns}, and
considering the factor $1^j 0^k 1^\ell$ gives us that $k$ is odd, also by
Lemma~\ref{threeruns}.  

There are four cases to consider:
\bigskip

\noindent{\it Case 1:}  $i \geq k$, $j > 1$:
Write $w = 0^{i-k} x x' 1^{\ell - 1}$, with
$x = 0^k 1^{(j+1)/2}$ and $x' = 1^{(j-1)/2} 0^k 1$.
Note that $x \sim x'$ and $x \not= x'$, a contradiction.

\bigskip

\noindent{\it Case 2:}  $i \geq k$, $j = 1$, $k > 1$:
Write $w = 0^{i-1} x x' 1^{\ell - 1}$, with
$x = 0 1 0^{(k-1)/2}$ and $x' = 0^{(k+1)/2} 1$.
Note that $x \sim x'$ and $x \not= x'$, a contradiction.

\bigskip

\noindent{\it Case 3:}   $j \leq \ell$, $k > 1$. \\
\noindent{\it Case 4:}  $j \leq \ell$, $k = 1$, $j > 1$.
Cases 3 and 4 are completely parallel to Cases 1 and 2, by considering $\overline{w^R}$
instead of $w$.
\end{proof}

\bigskip

\noindent{\it Five runs:}   
\begin{lemma}
If $x$ is mesosome-free and has 5 runs, then $x = 01010$.
\label{fiveruns}
\end{lemma}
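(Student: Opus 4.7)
The plan is as follows. Since we restrict to binary words beginning with $0$, any five-run candidate has the form $w = 0^i 1^j 0^k 1^\ell 0^m$ with $i, j, k, \ell, m \geq 1$. I would apply Lemma~\ref{fourruns} to the two overlapping four-run factors $u_1 = 0^i 1^j 0^k 1^\ell$ and $u_2 = 1^j 0^k 1^\ell 0^m$, each of which inherits mesosome-freeness from $w$. Lemma~\ref{fourruns} is stated for words starting with $0$, so I would apply it to $u_1$ directly and to the complement $\overline{u_2} = 0^j 1^k 0^\ell 1^m$, using the observation from the opening of Section~\ref{two} that a word is mesosome-free iff its complement is.

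Each of $u_1$ and $\overline{u_2}$ must then fall under case (a) or case (b) of Lemma~\ref{fourruns}, giving four combinations. I expect three of them to collapse almost instantly: if $u_1$ is in case (a) and $\overline{u_2}$ in case (b), then $k = 1$ from the former together with $k > m \geq 1$ from the latter are inconsistent; the mirror combination (case (b) for $u_1$, case (a) for $\overline{u_2}$) is symmetric, forcing $i < k = 1$; and if both are in case (b), the two inequalities $j > \ell$ and $j < \ell$ cannot both hold. So both factors must lie in case (a), which forces $j = k = \ell = 1$ and hence $w = 0^i 1 0 1 0^m$.

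To finish, I would pin down $i = m = 1$ by exhibiting explicit short mesosome factors. The word $001010$ factors as $(001)(010)$, and $010$ is a nontrivial cyclic shift of $001$, so $001010$ is itself a mesosome; it occurs as a factor of $w$ whenever $i \geq 2$. Symmetrically, $010100 = (010)(100)$ is a mesosome and occurs as a factor whenever $m \geq 2$. Therefore $i = m = 1$ and $w = 01010$, as claimed. I do not anticipate any serious obstacle: Lemma~\ref{fourruns} has already done the real work, and the only thing requiring care is the bookkeeping of the complementation needed to put $u_2$ into the form to which Lemma~\ref{fourruns} applies.
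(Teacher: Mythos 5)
Your proposal is correct and follows essentially the same route as the paper: apply Lemma~\ref{fourruns} to the first and last four runs, eliminate three of the four case combinations by the same contradictions ($k=1>m$, $i<k=1$, and $j>\ell$ versus $j<\ell$), and then rule out $i\geq 2$ or $m\geq 2$ via the explicit mesosomes $(001)(010)$ and $(010)(100)$. The only difference is presentational—you make explicit the complementation needed to apply Lemma~\ref{fourruns} to the suffix $1^j0^k1^\ell0^m$, which the paper leaves implicit.
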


\begin{proof}
Let $x = 0^i 1^j 0^k 1^\ell 0^m$.   
We can now apply Lemma~\ref{fourruns} to the first four and last four runs.
This gives four possibilities:

\bigskip

\noindent{\it Case 1:}
$j=k=\ell = 1$.   If $i \geq 2$, then $w = 0^{i-2} x x' 0^{m-1}$,
with $x = 001$ and $x' =01 0$.   So $x \sim x'$ and $x \not= x'$.
The case $m \geq 2$ follows analogously by considering $\overline{w^R}$.
So $i = m = 1$.  In this case $w = 01010$, which has no mesosome.

\bigskip

\noindent{\it Case 2:}  $j = k = 1$ and $j < \ell$ and $k > m$.   But then
$k = 1 > m \geq 1$, a contradiction.

\bigskip

\noindent{\it Case 3:}  $i< k$ and $j > \ell$ and $k = \ell = 1$.   But then
$i < k = 1$, a contradiction.

\bigskip

\noindent{\it Case 4:}  $i < k$ and $j > \ell$ and $j < \ell$ and $k > m$.
But then $j > \ell$ and $j < \ell$, a contradiction.
\end{proof}

\bigskip
\noindent{\it More than $5$ runs:}
By applying Lemma~\ref{fiveruns} to $w$, we see that the first five runs
must be $01010$.   We can then apply the same lemma to each consecutive
group of $5$ runs and conclude that $w = (01)^{n/2}$ or $w = (01)^{(n-1)/2} 0$.

For the converse, we need to see that these words are mesosome-free.
But this is clear, since any factor of the form $y y'$ with
$|y|=|y'|$ either satisfies $y = y'$ if $|y|$ is even, or
$y = \overline{y'}$ if $|y|$ is odd.  In this latter case, $y'$ contains
one more copy of one letter than $y$ does.

Thus we have proved the following result:
\begin{theorem}
  The finite binary word $x$ is mesosome-free iff $x$ or $\overline{x}$ has one of the following forms:
  \begin{itemize}
    \item[(a)] $0^{i}$, $i \geq 1$;
    \item[(b)] $0^{i}1^{j}$, $i, j \geq 1$;
    \item[(c)] $0^{i}1^{j}0^{k}$ for $i, j, k \geq 1$ and $j$ odd;
    \item[(d)] $0^{i}101^{j}$ for $i, j \geq 1$;
    \item[(e)] $0^{i}1^{j}0^{k}1^{\ell}$ for $i, j, k,\ell \geq 1$
    and $j,k$ odd, $i<k$ and $j>l$;
    \item[(f)] $(01)^{i}$ or  $(01)^{i}0$ for $i \geq 1$.
  \end{itemize}
  \label{Theorem 1}
\end{theorem}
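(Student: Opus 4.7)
The plan is to assemble the theorem from the run-count case analysis already carried out in Lemmas~\ref{threeruns}, \ref{fourruns}, and \ref{fiveruns}, plus the discussion that immediately precedes the theorem. After reducing to words beginning with $0$ via the observation that $x$ has a mesosome iff $\overline{x}$ does, I partition binary words starting with $0$ by the number $r$ of maximal runs and treat each case separately.

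For $r=1,2$ the two short paragraphs preceding Lemma~\ref{threeruns} immediately supply forms (a) and (b). For $r=3$, Lemma~\ref{threeruns} yields form (c) verbatim. For $r=4$, Lemma~\ref{fourruns} gives exactly two sub-cases: the case $j=k=1$ of the lemma produces words $0^i 1 0 1^\ell$, which is form (d); the case $i<k$, $j>\ell$, $j,k$ odd is form (e) verbatim. For $r=5$, Lemma~\ref{fiveruns} pins the word to $01010=(01)^2 0$, a special instance of form (f).

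The main step is the $r\geq 6$ case. Here I would slide a window of five consecutive runs across $w$: applying Lemma~\ref{fiveruns} to the first five runs forces them to spell $01010$, so each of those runs has length $1$ and they alternate in letter. Applying the same lemma to runs $2,\ldots,6$ (equivalently, to the subword starting at position $2$) forces runs $2,\ldots,6$ to spell $10101$, so run $6$ is a single $1$. Iterating along $w$ shows every run has length $1$ and the runs alternate, giving $w=(01)^{n/2}$ or $w=(01)^{(n-1)/2}0$, which is form (f). For the converse, forms (a)--(e) are mesosome-free by the Lemmas themselves, and for (f) the argument already given suffices: any factor $yy'$ of $(01)^\infty$ with $|y|=|y'|$ satisfies $y=y'$ when $|y|$ is even and $y=\overline{y'}$ when $|y|$ is odd; in the latter case $y$ and $y'$ have different numbers of $0$s and so cannot be conjugate.

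The only real obstacle is bookkeeping: one must notice that form (d) corresponds exactly to the $j=k=1$ branch of Lemma~\ref{fourruns} (so that the central $1$ and $0$ get absorbed into the literal pattern $10$), and that the $r=5$ word $01010$ is absorbed into form (f) rather than needing a case of its own. Once these identifications are made, the theorem is a clean union of the cases above.
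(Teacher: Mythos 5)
Your proposal is correct and follows essentially the same route as the paper: classify by the number of runs, invoke Lemmas~\ref{threeruns}, \ref{fourruns}, and \ref{fiveruns} for three, four, and five runs, slide the five-run window to force $(01)^{n/2}$ or $(01)^{(n-1)/2}0$ when there are more than five runs, and dispose of the converse for form (f) with the same parity observation about factors $yy'$ with $|y|=|y'|$. No substantive difference from the paper's argument.
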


\section{Counting the mesosome-avoiding words}

Based on the characterization of Section~\ref{two}, we can now count the mesosome-avoiding binary words of length $n$.   
The following table gives the number $m_n$ of such words of length $n$ for the first
few values of $n$:
\begin{center}
\begin{tabular}{c|ccccccccccccccccccccc}
$n$ & 0& 1& 2& 3& 4& 5& 6& 7& 8& 9&10&11&12&13&14&15&16&17\\
\hline
$m_n$ & 1&  2&  4&  8& 14& 24& 32& 42& 54& 68& 82& 98&118&140&162&186&216&248
\end{tabular}
\end{center}
It is sequence \seqnum{A341277} in the {\it On-Line Encyclopedia of Integer Sequences} \cite{Sloane}.

\begin{theorem}
Let $m(n)$ denote the number of mesosome-avoiding binary words of length $n$.
Write $n = 4k+i$ for $i = 0, 1, 2, 3$ and $n \geq 5$.
Then
\begin{align}
m(4k) &= (4k^3 + 15k^2 +41k -12)/3; \nonumber \\
m(4k+1) &= (4k^3 + 18k^2 + 50k)/3; \label{mesoformula}\\
m(4k+2) &= (4k^3 + 21k^2 + 59k + 12)/3; \nonumber\\
m(4k+3) &= (4k^3 + 24k^2 + 68k + 30)/3. \nonumber
\end{align}
\end{theorem}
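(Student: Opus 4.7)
The plan is to apply Theorem~\ref{Theorem 1} directly, counting each of the six families (a)--(f) separately and summing them up. First observe that complementation is a bijection from mesosome-free words of length $n$ starting with $0$ onto those starting with $1$, so $m(n) = 2 A_n$ for $n \geq 1$, where $A_n$ counts the words of length $n$ belonging to one of the families (a)--(f) of Theorem~\ref{Theorem 1} (all of which consist of words starting with $0$).

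Next I would establish that for $n \geq 5$ the six families are pairwise disjoint, so $A_n$ is just the sum of the individual family counts. The argument is short: families (a), (b), (c) consist of words with exactly $1$, $2$, $3$ runs; the family (f) words of length $n \geq 5$ have at least $5$ runs; and the two $4$-run families (d) and (e) are disjoint because (e) forces $j, k \geq 3$ (since $j > \ell \geq 1$ with $j$ odd forces $j \geq 3$, and symmetrically for $k$), whereas (d) has $j = k = 1$.

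Counting families (a), (b), (d), (f) is immediate, contributing $1$, $n-1$, $n-3$, and $1$ respectively. Family (c) is a one-variable sum of $n - j - 1$ over odd $j$ in $[1, n-2]$, evaluating to $m(m-1)$ when $n = 2m$ and $m^2$ when $n = 2m+1$. The main obstacle is family (e): parameterizing $j = 2a+1$ and $k = 2b+1$ with $a, b \geq 1$ reduces the problem to counting quadruples $(i, \ell, a, b)$ with $i + \ell + 2a + 2b = n - 2$, $i, \ell \geq 1$, $i \leq 2b$, and $\ell \leq 2a$. I would evaluate this by grouping terms according to $t = a + b$ and applying the standard closed form for the number of lattice points on a diagonal of a rectangle; the inner sum splits into cases depending on how the linear constraint $i + \ell = n - 2 - 2t$ intersects the rectangle $[1, 2b] \times [1, 2a]$, and the result is a cubic polynomial in $n$ whose coefficients depend on $n \bmod 4$ because of the parity interaction with the even bounds $2a, 2b$.

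Adding the six counts and multiplying by $2$ then produces a cubic polynomial expression for $m(n)$ whose coefficients depend on $n \bmod 4$; substituting $n = 4k + i$ for each $i \in \{0, 1, 2, 3\}$ and simplifying yields the four displayed formulas. As a sanity check, one matches each formula against several entries of the table; alternatively, having established that the answer is a period-$4$ piecewise cubic, one could skip most of the bookkeeping and pin down the $16$ coefficients using the table directly.
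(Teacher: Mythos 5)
Your proposal follows essentially the same route as the paper: reduce to words starting with $0$ via complementation, count each family of Theorem~\ref{Theorem 1} separately (the only nontrivial one being family (e)), and combine according to $n \bmod 4$. The paper organizes the family-(e) count as a convolution $\sum_r f(r)f(n-r)$ with $f(r)=\lfloor r/4\rfloor$ evaluated via roots of unity rather than your lattice-point grouping by $t=a+b$, but this is only a difference in bookkeeping, not in approach.
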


\begin{proof}
In what follows we only consider
the words that begin with $0$; to get the total of all such words, as provided in the statement of the the theorem, it is necessary to multiply the intermediate results below by $2$.

\bigskip

\noindent{\it Words with one run:}   There is only
one, namely $0^n$.

\bigskip

\noindent{\it Words with two runs:}  We must count
the words of the form $0^i 1^j$ for $i, j \geq 1$
and $i+j = n$.   Clearly there are $n-1$ such words.

\bigskip

\noindent{\it Words with three runs:}  We must
count words of the form $0^i 1^j 0^k$ for $i,j,k \geq 1$
and $j$ odd.  Once $j$ is fixed, there are clearly
$n-j-1$ such words, so letting $j = 2j'-1$, we see that the total is
$$\sum_{1 \leq j' \leq (n-1)/2} n-2j' = 
\begin{cases} n(n-2)/4, & \text{if $n$ is even}; \\
(n-1)^2/4, & \text{if $n$ is odd}.
\end{cases}$$

\noindent{\it Words with four runs:}   Here there
are two cases.   Either $w = 0^i 1 0 1^j$ for $i, j \geq 1$, or $w = 0^i 1^j 0^k 1^\ell$ with $i,j,k,\ell \geq 1$,
$i < k$, $j > \ell$, and $j, k$ odd.

In the former case there are clearly $n-3$ such words.

To count the second case, let $f(r)$ denote the
number of ways to write $r$ as the sum $r = i + k$
with $i, k \geq 1$ and $i < k$ and $k$ odd.   Then it is easy to see
that $f(r) = \lfloor r/4 \rfloor$.   Then the second
case is clearly $g(n) = \sum_{0 \leq r \leq n} f(r)f(n-r)$.   To compute $g(n)$, rewrite
$f(r) = \lfloor r/4 \rfloor$ as 
$$r/4  - 3/8 + (-1)^r/8 + (1-i)i^r/8 + (1+i)(-i)^r/8,$$ 
and then sum $g$ in the standard way (or use Maple).
We get
$$g(n) = {{n-3} \over 96} (n^2 - 6n + 3(-1)^n/2 + 1/2) +
{1 \over {192}} ((6in -18i + 6(-1)^n)(-i)^n +
(-6in + 18i + 6(-1)^n)i^n) .$$
Considering the possibilities mod 4, this gives
\begin{align*}
g(4n) &= 2n^3/3 - (3/2)n^2 + 5n/6 \\
g(4n+1) &= 2n^3/3 - n^2 + n/3 \\
g(4n+2) &= 2n^3/3 - n^2/2 - n/6 \\
g(4n+3) &= 2n^3/3 -2n/3.
\end{align*}

\noindent{\it Words with five or more runs:}  From Lemma~\ref{fiveruns} we see that for $n \geq 5$,
there is exactly one such word of length $n$.

\bigskip

Now, adding up all five cases, and multiplying by $2$, we
get that the total number $m(n)$ of mesosome-avoiding words
of length $n \geq 5$ is given by 
Eqn.~\eqref{mesoformula}.

\end{proof}

\section{Infinite binary words avoiding mesosomes}

Recall that by $x^\omega$, for a finite
word $x$, we mean the infinite word
$xxx\cdots$.

\begin{theorem}
    Let $i, j \geq 1$.
  If $\bf x$ is an infinite binary mesosome-free word, then $\bf x$
  has one of the following forms (or their binary complement):
  \begin{itemize}
    \item $0^{\omega}$
    \item $0^{i} \, 1^{\omega}$
    \item $0^{i} \, 1^{2j-1} \, 0^{\omega}$
    \item $0^{i}10 \, 1^{\omega}$
    \item $(01)^{\omega}$
  \end{itemize}
\end{theorem}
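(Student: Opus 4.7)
The plan is to leverage the finite-word characterization of Theorem~\ref{Theorem 1} by applying it to every prefix of an infinite mesosome-free word $\mathbf{x}$. Since every factor of a mesosome-free word is mesosome-free, every prefix of $\mathbf{x}$ must fall into one of the six types (a)--(f); the question becomes which of those types can be extended indefinitely. By complement symmetry I may assume $\mathbf{x}$ begins with $0$, and I would split on whether $\mathbf{x}$ has finitely many or infinitely many runs.

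For the finitely-many-runs case I case on the number of runs of $\mathbf{x}$. One run gives $\mathbf{x} = 0^\omega$; two runs give $\mathbf{x} = 0^i 1^\omega$. With three runs, Lemma~\ref{threeruns} forces the middle block to have odd length, so $\mathbf{x} = 0^i 1^{2j-1} 0^\omega$. With four runs, the prefixes take the form $0^i 1^j 0^k 1^\ell$ with $\ell$ unbounded; Lemma~\ref{fourruns}(b) requires $j > \ell$, which eventually fails, so only Lemma~\ref{fourruns}(a) survives and $\mathbf{x} = 0^i 1 0 1^\omega$. For five or more finitely many runs, any five consecutive runs would have to form $01010$ or $10101$ by Lemma~\ref{fiveruns}, forcing each of those runs to have length $1$; but the last run of $\mathbf{x}$ is infinite, a contradiction.

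For the infinitely-many-runs case the key observation is that any five consecutive runs $r_i, r_{i+1}, \ldots, r_{i+4}$ of $\mathbf{x}$ concatenate to a factor of $\mathbf{x}$ with exactly five runs (since the $r_j$ are maximal blocks in $\mathbf{x}$, the factor begins and ends at run boundaries). By Lemma~\ref{fiveruns}, this factor must equal $01010$ or $10101$, so each of these five runs has length $1$. Applying this for every $i \geq 1$ shows every run of $\mathbf{x}$ has length $1$, and since $\mathbf{x}$ begins with $0$ this forces $\mathbf{x} = (01)^\omega$.

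Finally, one verifies the converse---that each of the listed infinite words really is mesosome-free---by checking that every finite prefix matches one of the forms (a)--(f) of Theorem~\ref{Theorem 1}. The main obstacle is the infinitely-many-runs case; once one spots that every window of five consecutive runs forms a 5-run factor that Lemma~\ref{fiveruns} pins down to $01010$ or $10101$, the rest reduces to taking limits of the finite-run analyses already carried out in Section~\ref{two}.
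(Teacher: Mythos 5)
Your proposal is correct and follows essentially the same route as the paper: apply the finite characterization (Theorem~\ref{Theorem 1} and Lemmas~\ref{threeruns}--\ref{fiveruns}) to prefixes, observe that form (e) cannot persist because $j>\ell$ eventually fails, and pass to the limit. Your organization by the number of runs of $\bf x$ (with Lemma~\ref{fiveruns} applied to windows of five consecutive runs in the infinitely-many-runs case) is a slightly more explicit packaging of the same argument.
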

\begin{proof}
First suppose that $\bf x$ is an infinite binary mesosome-free word. Let $x_n$ denote the first $n$ letters of $\bf x$, for all $n\ge 1$. Note that $x_n$ must be mesosome-free for all such $n$. This means that for all such $n$, $x_n$ must be of the form described in Theorem~\ref{Theorem 1}. 

Now suppose to the contrary that for some $n$, $x_n$ has the form $0^i1^j0^k1^\ell$ for $j,k$ odd, $i<k$ and $j>\ell$. Then if we add zeros at the end of $x_n$, we will have a word with five runs not of the form $(01)^i$ or $(01)^i0$, and so we will have a mesosome. If we add ones at the end of $x_n$, then eventually $\ell > j$ and our new word will have a mesosome. Therefore, there is no way to add letters at the end of $x_n$ while ensuring that the new word is mesosome-free. Thus, $x$ must contain a mesosome, a contradiction.

It follows that $x_n$ is of the form $0^i$, $0^i1^j$, $0^i1^{2j-1}0^\ell$, $0^i101^j$, $(01)^i$, or $(01)^i0$ for all $n$. It is easy to see by letting $n$ approach infinity that $x$ must be of the required form.

Now suppose that $x$ is of the required form. Then clearly for all natural $n$, $x_n$ is mesosome-free by Theorem~\ref{Theorem 1}, so $x$ is mesosome-free.
\end{proof}

\section{Minimal forbidden words}

A word is {\it minimal forbidden} if it is a mesosome,
but no proper subword is a mesosome.   For example, the smallest minimal forbidden words are $0110$ and $1001$. There are ten minimal forbidden words
of length $6$, which are $001010$,
 $010001$,
 $010100$,
 $011101$, 
 $011110$, and their binary complements.   These are the only minimal
 forbidden words of length $\leq 7$.
Note that a minimal forbidden word is always of the form $xx'$ for $x\sim x'$, so a minimal forbidden word always has even length.

Our first goal is to characterize the minimal forbidden words.
We do this by considering  cases depending on the number of runs that the words contain.
\bigskip

\noindent{\it One or two runs:}
All words with one or two runs are mesosome-free, so none are minimal-forbidden.
\bigskip

\noindent{\it Three runs:}
Let  $w=0^i1^j0^k$ be minimal forbidden. Then $w$ is a mesosome, so $j$ must be even by Lemma~\ref{threeruns}. Now, if $i>1$, we can write $w=0x$, where $x=0^{i-1}1^j0^k$ is a mesosome, implying that $w$ is not minimal forbidden. Similarly, if $k>1$, then $w$ is not minimal forbidden. Thus, we must have $i=1=k$, and so $w=01^{n-2}0$ for even $n$.

\bigskip

\noindent{\it Four runs:}
\begin{lemma}
A word of the form $0^i1^j0^k0^\ell$ is minimal forbidden iff $j$ and $k$ are odd and either \begin{itemize}
\item[(a)] $j>1$, $\ell=1$ and $i=k$ or 
\item[(b)] $k>1$, $i=1$ and $j=\ell$.
\end{itemize}
\label{min forbidden fourruns}
\end{lemma}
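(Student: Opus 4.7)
The plan is to exploit the following reformulation: a word $w$ of length $n$ is minimal forbidden iff $w$ itself is a mesosome while every proper factor of $w$ is mesosome-free. Since every proper factor of $w$ is contained in either the maximal proper prefix $w[1 \mathbin{..} n-1]$ or the maximal proper suffix $w[2\mathbin{..} n]$, being minimal forbidden is equivalent to the three conditions: $w$ is a mesosome, $w[1\mathbin{..} n-1]$ is mesosome-free, and $w[2 \mathbin{..} n]$ is mesosome-free. By Theorem~\ref{Theorem 1}, each of these three conditions translates directly into arithmetic constraints on the run lengths, and the whole proof reduces to an analysis of those constraints.

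Writing $w = 0^i 1^j 0^k 1^\ell$, I would split into four cases according to which of $i, \ell$ equal $1$, since this determines whether the prefix and suffix each have three or four runs. If $i = \ell = 1$, both the prefix $0 1^j 0^k$ and the suffix $1^j 0^k 1$ are 3-run words, so Theorem~\ref{Theorem 1}(c) forces $j$ and $k$ odd; then Theorem~\ref{Theorem 1} applied to $w = 0 1^j 0^k 1$ itself shows that $w$ is a mesosome precisely when exactly one of $j, k$ exceeds $1$, giving (a) with $i = k = 1$, $j > 1$, or (b) with $i = j = \ell = 1$, $k > 1$. If $i = 1$ and $\ell > 1$, the suffix is a 3-run word forcing $k$ odd, while the prefix $0 1^j 0^k 1^{\ell-1}$ is a 4-run word; applying Theorem~\ref{Theorem 1}(e) to the prefix (the alternative (a) possibility $j=k=1$ will be seen to be incompatible with $w$ being a mesosome) forces $k > 1$, $j \geq \ell$, and $j, k$ odd. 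For $w$ itself to be a mesosome, Theorem~\ref{Theorem 1}(e) must fail for $w$, giving $j \leq \ell$, hence $j = \ell$, which is condition (b). The case $i > 1$, $\ell = 1$ is symmetric via the involution $w \mapsto \overline{w^R}$, yielding condition (a). Finally, if $i > 1$ and $\ell > 1$, both prefix and suffix are 4-run words, and combining the two instances of Theorem~\ref{Theorem 1}(e) (the alternative (a) possibility $j=k=1$ makes $w$ itself mesosome-free, ruled out) gives $i < k$, $j > \ell$, and $j, k$ odd, but this makes $w$ itself mesosome-free by Theorem~\ref{Theorem 1}(e), a contradiction. Hence no minimal forbidden words arise in this last case.

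The converse direction---that every $w$ satisfying (a) or (b) is indeed minimal forbidden---is a routine verification: one checks via Theorem~\ref{Theorem 1} that $w$ fits none of the six classes (so $w$ is a mesosome), while the prefix and suffix each fit one of the classes (so are mesosome-free). The main obstacle is purely bookkeeping: one must be careful about the boundary cases where shrinking $w$ by one letter collapses a run and reduces the run count from four to three, since Theorem~\ref{Theorem 1} treats these situations under different clauses. In particular, one must verify that the degenerate alternative $j = k = 1$ of the 4-run classification never contributes a spurious minimal forbidden word, and that the two sub-possibilities arising when $i = \ell = 1$ are correctly partitioned between conditions (a) and (b).
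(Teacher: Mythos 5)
Your overall strategy is sound and is organized genuinely differently from the paper's. The paper first forces $j,k$ odd by exhibiting the explicit proper factors $01^j0$ and $10^k1$, then splits into four cases according to the signs of $j-\ell$ and $k-i$, and in each forbidden case writes down an explicit conjugate pair, e.g.\ $w=0^{i-k}xx'1^{\ell-1}$ with $x=0^k1^{(j+1)/2}$, $x'=1^{(j-1)/2}0^k1$ (plus the mirror image under $w\mapsto\overline{w^R}$). You instead reduce everything to the two maximal proper factors $w[1..n-1]$ and $w[2..n]$ and split on whether $i=1$ and $\ell=1$, so that each of these factors is a three- or four-run word to which Theorem~\ref{Theorem 1} applies mechanically. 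Both routes ultimately rest on Lemma~\ref{fourruns}; yours trades the explicit conjugate decompositions for arithmetic bookkeeping about run collapse, which you handle correctly (in particular your four cases do recombine to exactly conditions (a) and (b)).

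One step must be made explicit, however. Theorem~\ref{Theorem 1} characterizes the words that are mesosome-\emph{free}, i.e., contain no mesosome factor; it does not tell you when $w$ itself \emph{is} a mesosome $xx'$. As written, ``Theorem~\ref{Theorem 1}(e) must fail for $w$'' only yields that $w$ contains a mesosome somewhere, and in the converse direction ``$w$ fits none of the six classes (so $w$ is a mesosome)'' does not follow. The repair lives inside your own framework: once both maximal proper factors of $w$ are known to be mesosome-free, no proper factor of $w$ is a mesosome, so any mesosome occurring in $w$ must be all of $w$; hence, under that hypothesis, ``not mesosome-free'' and ``is a mesosome'' coincide, and your three-condition reformulation may be used with either phrasing. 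State this once and the case analysis goes through as you describe.
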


\begin{proof}
Let $w=0^i1^j0^k1^\ell$ be minimal forbidden. Then $j$ and $k$ must be odd, because otherwise either $01^j0$ or $10^k1$ would be a proper subword and a mesosome. Having established that $j$ and $k$ are odd, we split into four cases as follows:
\bigskip

\noindent{\it Case 1:} $j\le \ell$ and $k \le i$. If $j,k = 1$ then $w=0^i101^\ell$ is mesosome-free by Lemma~\ref{fourruns}, so is not minimal forbidden. 
Otherwise, suppose that $j>1$. Then $\ell>1$, so we can write $w=x1^{\ell-1}$ where $x=0^i1^j0^k1$ is a proper subword and (since $k\le i$ and $j\ne 1$) a mesosome by Lemma~\ref{fourruns}. A similar argument applied to $\overline{w^R}$ will show that if $k>1$, then $w$ is still not minimal forbidden, a contradiction.
\bigskip

\noindent{\it Case 2:} $j>\ell$ and $k \le i$. Note that if $x = 0^k1^{(j+1)/2}$ and $x' = 1^{(j-1)/2}0^k1$, then $xx'$ is minimal forbidden and $w=0^{i-k}xx'1^{\ell-1}$. It follows that in this case, $w$ is minimal forbidden iff $0^{i-k}$ and $1^{\ell-1}$ are empty; that is, if and only if $i=k$ and $\ell=1$.
\bigskip

\noindent{\it Case 3:} $j \le \ell$ and $k > i$. By considering $\overline{w^R}$, we see that this case is symmetrical to case 2. It follows that in this case, $w$ is minimal forbidden iff $\ell= j$ and $i=1$.
\bigskip

\noindent{\it Case 4:} $j > \ell$ and $k > i$. By Lemma~\ref{fourruns}, $w$ is mesosome-free, so cannot be minimal forbidden.
\end{proof}
\bigskip

\noindent{\it Five runs:}
\begin{lemma}
No word of length $n\ge 8$ of the form $0^i1^j0^k1^\ell0^m$ is minimal forbidden.
\label{min forbidden fiveruns}
\end{lemma}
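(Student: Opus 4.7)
The plan is to argue by contradiction: suppose $w = 0^i 1^j 0^k 1^\ell 0^m$ is minimal forbidden with $|w| = n \geq 8$, and deduce that some proper factor of $w$ is already a mesosome. Since $w$ has five runs, each of $0^i 1^j 0^k$, $1^j 0^k 1^\ell$, and $0^k 1^\ell 0^m$ is a proper three-run factor of $w$ and hence must be mesosome-free. By Lemma~\ref{threeruns}, this already forces $j$, $k$, and $\ell$ all to be odd.

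Next I would look at the two 4-run proper factors: the prefix $u = 0^i 1^j 0^k 1^\ell$ and the suffix $v = 1^j 0^k 1^\ell 0^m$. Since the parities of $j, k$ are already odd, Lemma~\ref{fourruns} tells us that $u$ can be mesosome-free only in one of two ways: (A1) $j = k = 1$, or (A2) $i < k$ and $j > \ell$. Applying the same lemma to $\overline{v} = 0^j 1^k 0^\ell 1^m$ shows that $v$ can be mesosome-free only if (B1) $k = \ell = 1$, or (B2) $j < \ell$ and $k > m$. The four pairings are easy to sieve: (A1)+(B2) forces $1 = k > m \geq 1$, (A2)+(B1) forces $i < k = 1$, and (A2)+(B2) demands $j > \ell$ and $j < \ell$ simultaneously. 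Only (A1)+(B1) survives, giving $j = k = \ell = 1$, so $w = 0^i 1 0 1 0^m$ with $i + m = n - 3 \geq 5$.

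Finally, I would exhibit a proper 5-run mesosome factor of this reduced $w$. Since $i + m \geq 5$ with $i, m \geq 1$, we have $\max(i,m) \geq 3$; choose $(a,b) = (1,2)$ when $m \geq 2$, and $(a,b) = (2,1)$ when $i \geq 2$. In either case $0^a 1 0 1 0^b$ is a factor of $w$ with $1 \leq a \leq i$ and $1 \leq b \leq m$, $(a,b) \neq (1,1)$, and $a + b = 3 < i + m$, so it is proper. By Lemma~\ref{fiveruns} this factor is a mesosome, contradicting minimality.

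The main obstacle is the case analysis in the middle step, which has to funnel the possible shape of $w$ down to the very restrictive form $0^i 1 0 1 0^m$; once that funneling is done, the existence of a proper five-run mesosome subword is a one-line consequence of Lemma~\ref{fiveruns}, so no further structural work is required.
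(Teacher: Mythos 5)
Your proposal is correct and follows essentially the same route as the paper: reduce to the case $j=k=\ell=1$ by showing that otherwise a proper three- or four-run factor already contains a mesosome (the paper delegates this to ``as in the proof of Lemma~\ref{fiveruns}'', which is exactly your sieve of Lemma~\ref{fourruns} applied to the prefix and suffix), and then exhibit a proper length-$6$ mesosome factor of $0^i101 0^m$. Your write-up is somewhat more explicit than the paper's; the only nitpick is that Lemma~\ref{fiveruns} literally yields that the five-run factor \emph{contains} a mesosome rather than \emph{is} one, but either way minimality is contradicted.
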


\begin{proof}
Suppose to the contrary that $w=0^i1^j0^k1^\ell0^m$ has length $\ge 8$ and is 
minimal forbidden. Then clearly at least one of $i$, $k$, $j$, $\ell$, and $m$ must be greater than $1$. We split into two cases based on this.
\bigskip

\noindent{\it Case 1:} $j,k,l=1$. Then $i>1$ or $m>1$. Suppose that $i>1$. Then we can write $w=0^{i-2}xx'0^{m-1}$, where $x=001$ and $x'=010$. Note that $xx'$ is a mesosome of length $6$, and since $w$ has length $\ge 8$, it must be that $xx'$ is a proper subword of $w$, a contradiction. The case where $m>1$ is symmetrical.
\bigskip

\noindent{\it Case 2:} at least one of $j$, $k$, and $l$ is different than one. Then as in the proof of Lemma~\ref{fiveruns}, it can be shown that either $0^i1^j0^k1^\ell$ or $1^j0^k1^\ell0^m$ is a mesosome, implying that $w$ is not minimal forbidden.
\end{proof}
\bigskip

\noindent{\it Six runs or more:} Suppose to the contrary that $w$ is minimal forbidden with six runs or more. Then $w$ must have at least one run of length $>1$, because otherwise $w$ would be mesosome-free by Theorem~\ref{Theorem 1}. Therefore, we can write $w=uxv$, where $x$ has five runs, with one run of length $>1$. Note that $u$ and $v$ cannot both be empty, because $w$ needs to have at least six runs. Thus, $x$ is a proper subword of $w$. Since $x$ has at least one run of length $>1$, $x$ must also be a mesosome by Theorem~\ref{Theorem 1}, a contradiction. By contradiction, we see that there are no minimal forbidden words with six runs or more.
\bigskip

We have now proved the following result:
\begin{theorem}
For $n \equiv 0$ (mod 4) and $n \geq 8$ there are $n-2$ minimal forbidden
words and they are as follows:
\begin{itemize}
\item $0^{2i-1}1^{n-4i+1}0^{2i-1}1$
for $1 \leq i \leq n/4 -1$; 
\item $0 \, 1^{2i-1}\, 0^{n-4i+1}\, 1^{2i-1}$ for $1 \leq i \leq n/4-1$;
\item $0\, 1^{n-2} \, 0$,
\end{itemize}
and their binary complements.

For
$n \equiv 2$ (mod 4) and $n \geq 10$ there are $n$ minimal forbidden
words and they are as follows:
\begin{itemize}
\item $0^{2i-1}1^{n-4i+1}0^{2i-1}1$ for
$1 \leq i \leq (n-2)/4$; 
\item $0\, 1^{2i-1} \, 0^{n-4i+1} \, 1^{2i-1}$ for $1 \leq i \leq (n-2)/4$;
\item $0 \, 1^{n-2} \, 0$,
\end{itemize} and their binary complements.
\end{theorem}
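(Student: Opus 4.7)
The plan is to assemble the classification directly from the four per-run-count analyses above. Since a minimal forbidden word is a mesosome it has even length, and since every one- or two-run word is mesosome-free it must contain at least three runs; by Lemma~\ref{min forbidden fiveruns} together with the ``six or more runs'' paragraph, for $n \geq 8$ it contains at most four runs. Thus the entire classification reduces to collecting the three-run and four-run contributions (plus their binary complements) for each admissible $n$.

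The three-run case was already settled above: the unique three-run minimal forbidden word starting with $0$ is $0 \, 1^{n-2} \, 0$, yielding the third bullet of both cases. For the four-run case I would invoke Lemma~\ref{min forbidden fourruns}: its subcases (a) and (b) describe the families $0^i 1^j 0^i 1$ with $i, j$ odd and $j > 1$, and $0 \, 1^j \, 0^k \, 1^j$ with $j, k$ odd and $k > 1$, respectively. In subcase (a) I would substitute $i = 2i' - 1$; the length equation $|w| = 2i + j + 1 = n$ then forces $j = n - 4i' + 1$, automatically odd because $n$ is even, and the constraint $j > 1$ becomes $i' \leq \lfloor (n - 2)/4 \rfloor$. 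Subcase (b) is parameterized symmetrically by $j = 2i' - 1$, producing $k = n - 4i' + 1$ under the same range. This is exactly the indexing $1 \leq i \leq n/4 - 1$ when $n \equiv 0 \pmod 4$, and $1 \leq i \leq (n-2)/4$ when $n \equiv 2 \pmod 4$, matching the first two bullets.

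To finish, I would verify the total counts: for $n \equiv 0 \pmod 4$ I obtain $2\bigl( (n/4 - 1) + (n/4 - 1) + 1 \bigr) = n - 2$, and for $n \equiv 2 \pmod 4$, $2 \bigl( (n-2)/4 + (n-2)/4 + 1 \bigr) = n$, agreeing with the stated counts. Since all the real work is in the four preceding lemmas, there is no conceptual obstacle; the only finicky step is the reindexing $i \mapsto 2i' - 1$ and the case split of $\lfloor (n-2)/4 \rfloor$ according to $n \bmod 4$, which is pure parity bookkeeping.
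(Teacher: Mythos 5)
Your proposal is correct and follows exactly the paper's route: the paper states this theorem as the immediate consequence of the preceding run-count case analysis (three runs, Lemma~\ref{min forbidden fourruns}, Lemma~\ref{min forbidden fiveruns}, and the six-or-more-runs paragraph), and your reindexing $i \mapsto 2i'-1$ plus the parity bookkeeping for $\lfloor (n-2)/4 \rfloor$ is precisely the omitted assembly step. The count verification also matches the stated totals.
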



\begin{thebibliography}{10}

\bibitem{Brown&Freedman:1987}
T.~C. Brown and A.~R. Freedman.
\newblock Arithmetic progressions in lacunary sets.
\newblock {\em Rocky Mountain J. Math.} {\bf 17} (1987), 587--596.

\bibitem{Erdos:1961}
P.~{Erd\H os}.
\newblock Some unsolved problems.
\newblock {\em Magyar Tud. Akad. Mat. {Kutat\'o} Int. {K\"ozl.}} {\bf 6}
  (1961), 221--254.

\bibitem{Halbeisen&Hungerbuhler:2000}
L.~Halbeisen and N.~{Hungerb\"uhler}.
\newblock An application of {Van der Waerden's} theorem in additive number
  theory.
\newblock {\em INTEGERS---Elect. J. Combin. Number Theory} {\bf 0} (2000),
  \#A7.
\newblock {\tt http://www.integers-ejcnt.org/vol0.html}.

\bibitem{Justin:1972a}
J.~Justin.
\newblock {G\'en\'eralisation} du {th\'eor\`eme} de van der {Waerden} sur les
  semi-groupes {r\'ep\'etitifs}.
\newblock {\em J. Combin. Theory. Ser. A} {\bf 12} (1972), 357--367.

\bibitem{Keranen:1992}
V.~{Ker\"anen}.
\newblock Abelian squares are avoidable on 4 letters.
\newblock In W.~Kuich, editor, {\em Proc. 19th Int'l Conf. on Automata,
  Languages, and Programming (ICALP)}, Vol. 623 of {\em Lecture Notes in
  Computer Science}, pp.  41--52. Springer-Verlag, 1992.

\bibitem{Loftus&Shallit&Wang:2000}
J.~Loftus, J.~O. Shallit, and {M.-w.} Wang.
\newblock New problems of pattern avoidance.
\newblock In G.~Rozenberg and W.~Thomas, editors, {\em Developments in Language
  Theory, 1999}, pp.  185--199. World Scientific, 2000.

\bibitem{Ng&Ochem&Rampersad&Shallit:2019}
T.~Ng, P.~Ochem, N.~Rampersad, and J.~Shallit.
\newblock New results on pseudosquare avoidance.
\newblock In R.~Mercas and D.~Reidenbach, editors, {\em WORDS 2019}, Vol. 11682
  of {\em Lecture Notes in Computer Science}, pp.  264--274. Springer-Verlag,
  2019.

\bibitem{Ochem&Rampersad&Shallit:2008}
P.~Ochem, N.~Rampersad, and J.~Shallit.
\newblock Avoiding approximate squares.
\newblock {\em Internat. J. Found. Comp. Sci.} {\bf 19} (2008), 633--648.

\bibitem{Pirillo&Varricchio:1994}
G.~Pirillo and S.~Varricchio.
\newblock On uniformly repetitive semigroups.
\newblock {\em Semigroup Forum} {\bf 49} (1994), 125--129.

\bibitem{Sloane}
N.~J.~A. Sloane et~al.
\newblock {The On-Line Encyclopedia of Integer Sequences}.
\newblock Web resource available at \url{https://oeis.org}, 2021.

\bibitem{Thue:1906}
A.~Thue.
\newblock {\"Uber} unendliche {Zeichenreihen}.
\newblock {\em Norske vid. Selsk. Skr. Mat. Nat. Kl.} {\bf 7} (1906), 1--22.
\newblock Reprinted in {\it Selected Mathematical Papers of Axel Thue}, T.
  Nagell, editor, Universitetsforlaget, Oslo, 1977, pp.~139--158.

\bibitem{Thue:1912}
A.~Thue.
\newblock {\"Uber} die gegenseitige {Lage} gleicher {Teile} gewisser
  {Zeichenreihen}.
\newblock {\em Norske vid. Selsk. Skr. Mat. Nat. Kl.} {\bf 1} (1912), 1--67.
\newblock Reprinted in {\it Selected Mathematical Papers of Axel Thue}, T.
  Nagell, editor, Universitetsforlaget, Oslo, 1977, pp.~413--478.

\end{thebibliography}
\end{document}